\newtheorem{definition}{\bf Definition}
\newtheorem{lemma}{\bf Lemma}
\newtheorem{theorem}{\bf Theorem}
\newcommand{\ra}[1]{\renewcommand{\arraystretch}{#1}}
\def\BibTeX{{\rm B\kern-.05em{\sc i\kern-.025em b}\kern-.08em
    T\kern-.1667em\lower.7ex\hbox{E}\kern-.125emX}}
\begin{document}

\title{Optimal $k$-Coverage Charging Problem}
\author{Xuan Li, Miao Jin\\
Center for Advanced Computer Studies, University of Louisiana at Lafayette, Lafayette, LA 70504}

\maketitle

\begin{abstract}
	
	Wireless rechargeable sensor networks, consisting of sensor nodes with rechargeable batteries and mobile chargers to replenish their batteries, have gradually become a promising solution to the bottleneck of energy limitation that hinders the wide deployment of wireless sensor networks (WSN). In this paper, we focus on the mobile charger scheduling and path optimization scenario in which the $k$-coverage ability of a network system needs to be maintained. We formulate the optimal $k$-coverage charging problem of finding a feasible path for a mobile charger to charge a set of sensor nodes within their estimated charging deadlines under the constraint of maintaining the $k$-coverage ability of the network system, with an objective of minimizing the energy consumption on traveling per tour. We show the hardness of the problem that even finding a feasible path for the trivial case of the problem is an NP-complete one.
	
	We model the problem and apply dynamic programming to design an algorithm that finds an exact solution to the optimal $k$-coverage charging problem. However, the computational complexity is still prohibitive for large size networks. We then introduce Deep Q-learning, a reinforcement learning algorithm to tackle the problem. Traditional heuristic or Mixed-Integer and Constraint Programming approaches to such combinatorial optimization problems need to identify domain-specific heuristics, while reinforcement learning algorithms could discover domain-specific control policies automatically. Specifically, Deep Q-Learning applies Deep Neural Network (DNN) to approximate the reward function that provides feedback to the control policy. It can handle complicated problems with huge size of states like the optimal $k$-coverage charging problem. It is also fast to learn the reward function. We implement three other heuristic algorithms for comparison and then conduct extensive simulations with experimental data included. Results demonstrate that the proposed Deep Q-Learning algorithm consistently produces optimal solutions or closed ones and significantly outperforms other methods.
	
\end{abstract}

\begin{IEEEkeywords}
Mobile charger, $k$-Coverage, Reinforcement learning, Wireless rechargeable sensor networks
\end{IEEEkeywords}

\section{Introduction}\label{sec:intro}

Wireless rechargeable sensor networks, consisting of sensor nodes with rechargeable batteries and mobile chargers to replenish their batteries, have gradually become a promising solution to the bottleneck of energy limitation that hinders the wide deployment of wireless sensor networks (WSN)~\cite{shi2011renewable,Rechargeable2015,wirelessrechargeable2016,liang2017approximation}. The mobile charger scheduling and path optimization problem optimizes the trajectory of a mobile charger (e.g., a mobile robot) to maintain the operation of a WSN system. Variants are studied by considering different optimization goals, application scenarios, and constraints.  Some research works extend the mobile charger scheduling and path optimization problem from one charger to multiple ones~\cite{Liang:2016:MLR:2925994.2898357,Lin18-INFOCOM} and static sensor nodes to mobile ones~\cite{he2014mobile,chen2016charge}.

%{\color{red}Inductive Coupling (IC), Electromagnetic Radiation (ER) and Magnetic Resonant Coupling (MRC)~\cite{xie2013modeling} are three current technologies in Wireless Energy Transfer (WET) area. IC supported for short charging distance and need accurate alignment for charging direction. It is applied to electric toothbrush and wireless charging for mobile devices. ER consists of omnidirectional ER and unidirectional ER. Omnidirectional ER is used on low power WSNs because its energy efficiency drop of rapidly over distance. Unidirectional ER supports kilometer range charging and it requires Line of Sight (LOS), large scale of devices and complicated tracking methods, so it is used in unmanned plane. MRC is a promising technology and currently widely used in charging Electrical Vehicular (EV)\cite{Qualcomm,WiTricity}. MRC can reach over 90 percent energy efficiency in $10-25$cm range \cite{WiTricity} and it is insensitive to environment change. MRC technique is a promising technique applied to mobile charger which used to charge the WSNs.}

Many of the prior works focused on the problem of maximizing the number of nodes charged within a fixed time horizon or energy constraint with the assumption that each sensor node contributes equally to the sensing quality of a network. However, a full area coverage is a basic requirement of WSN deployment to monitor a certain area. Multiple coverage, where each point of the field of interest (FoI) is covered by at least
$k$ different sensors with $k>1$ ($k$-coverage) , is often applied to increase the sensing accuracy of data fusion and enhance the fault tolerance in case of node failures~\cite{Yang06onconnected,Simon07dependablek,Handbook2008,VariableRadii2009,Multiple-Coverage2011,li2015autonomous}. Existing approaches to achieve $k$-coverage deploy a set of sensor nodes over a FoI either in a randomized way~\cite{Kumar:2004:KMS:1023720.1023735,Hefeeda:2007:RKA:2931311.2931587} or with a regular pattern~\cite{Multiple-Coverage2011, k-Coverage2012}. Regular deployments need less sensor nodes than randomized ones, but they require centralized coordination and a FoI with regular-shape. A common practice is a high density of sensor nodes randomly distributed  over the monitored FoI.

In this paper, we focus on the mobile charger scheduling scenario in which the $k$-coverage ability of a network system needs to be maintained. A node sends a charging request with its position information and a charging deadline estimated based on its current residual energy and battery consumption rate.  A mobile charger seeks a path to charge sensor nodes before their charging deadlines under the constraint of maintaining the $k$-coverage ability of the monitored area, with an objective of maximizing the energy usage efficiency, i.e., minimizing the energy consumption on traveling per tour. 

We formulate the optimal $k$-coverage charging problem and show its hardness. We then construct a directed graph to model the problem and prove that it is a directed acyclic graph (DAG). We first apply dynamic programming to search for an optimal charging path. However, the computational complexity is still prohibitive for large size networks. We then introduce Deep Q-learning, a reinforcement learning algorithm to tackle the problem. Traditional heuristic or Mixed-Integer and Constraint Programming approaches to such combinatorial optimization problems need to identify domain-specific heuristics, while reinforcement learning algorithms could discover domain-specific control policies automatically. Specifically, Deep Q-Learning applies Deep Neural Network (DNN) to approximate the reward function that provides feedback to the control policy. It can handle complicated problems with huge size of states like the optimal $k$-coverage charging problem. It is also fast to learn the reward function. We implement three other heuristic algorithms for comparison and then conduct extensive simulations with experimental data included. Results demonstrate that the proposed Deep Q-Learning algorithm consistently produces optimal solutions or closed ones and significantly outperforms other methods.

The main contributions of this work are as follows:

\begin{itemize}{\setlength{\leftmargin}{0.5em}\labelwidth=0.5em
		\itemsep=0.0em\topsep=0.2em\parsep=0.1em}
	\item  We formulate the optimal $k$-coverage charging problem.
	\item We prove the NP-hardness of the optimal $k$-coverage charging problem.
	\item  We model the optimal $k$-coverage charging problem and apply dynamic programming and reinforcement learning techniques to design algorithms to tackle the  problem with extensive simulations conducted to verify their effectiveness.
\end{itemize}

The rest of the paper is organized as follows. We review the closely related works in Section~\ref{Sec:Related}. We formulate the optimal $k$-coverage charging problem and analyze its hardness in Section~\ref{Sec:Problem}. We present our algorithms in Sections~\ref{Sec:Algorithm1} and~\ref{Sec:Algorithm2}, respectively. Section~\ref{Sec:Simulation} presents the simulation results. Section~\ref{Sec:Conclusion} concludes the paper.

\section{Related Works}
\label{Sec:Related}

The problem we study in the paper is closely related with the mobile charger scheduling and traveling salesman problem with deadline. We give a brief review of the related works.

\subsection{Mobile Charger Scheduling}

The mobile charger scheduling problem optimizes the trajectory of a mobile charger (e.g., a mobile robot) to maintain the operation of a network system. There are many research works in this area with variants of the problem. Here we only list some of the most recent and representative works.

Shi et al.~\cite{shi2011renewable} consider the scenario of a wireless vehicle charger periodically traveling inside a sensor network and charging sensor nodes. They aim to minimize the time spent on path in each cycle. Multi-node wireless energy transfer technology is considered in~\cite{xie2012renewable}. The authors propose a cellular structure that partitions a two-dimensional plane into adjacent hexagonal cells such that a wireless vehicle charger visits the center of each cell and charges several sensor nodes at the same time. Xie et al.~\cite{xie2013bundling,xie2013traveling} consider the scenario of co-locating a mobile base station in a wireless charging vehicle and investigate the optimization problems of  entire system. Liang et al.~\cite{liang2017approximation} seek a charging tour that maximizes the total energy replenished to sensor nodes by a mobile charger with a limit energy capacity. Dai et al.~\cite{dai2018radiation} considers the scenario that both chargers and rechargeable devices are static. They study the optimization  problem to maximize the overall effective charging energy of all rechargeable devices and minimize the total charging time without violating the electromagnetic radiation (EMR) safety.

Energy replenishment  in robotic sensor networks is discussed in~\cite{he2014mobile,chen2016charge}. Specifically, He et al.~\cite{he2014mobile} aim to minimize the traveling distance of a mobile charger and keep all the mobile robots charged before their deadlines.  Chen et al.~\cite{chen2016charge} seek a charging path maximizing the number of mobile robots charged within a limited time or energy budget.

Multiple mobile chargers in a wireless rechargeable network raise new challenges but can work more efficiently.
Collaborative mobile charging, where mobile chargers are allowed to intentionally transfer energy between themselves, is proposed in~\cite{Collaborative2015}  to optimize energy usage effectiveness. Liang et al.~\cite{Liang:2016:MLR:2925994.2898357}  minimize the number of mobile charging vehicles to charge sensors in a large-scale WSN so that none of the sensors will run out of energy. Lin et al.~\cite{Lin18-INFOCOM} propose a real-time temporal and spatial-collaborative charging scheme for multiple mobile chargers by combining temporal requirements as well as spatial features into a single priority metric to sort real-time charging requests.

\subsection{Traveling Salesman Problem with Deadline}

The Traveling Salesman Problem (TSP), a class of combinatorial optimization problems, has been extensively studied with many approximation and heuristic algorithms proposed~\cite{cormen2009introduction}. Deadline-TSP is a relevant extensions of TSP, which contains two type of problems. One is prize collecting TSP problem with deadlines\cite{bansal2004approximation,chekuri2005recursive}, where each node has a prize and need to be visited before their deadline, and we want to maximize the prizes with the limit of time or length of the walk. $\log N$-approximation solution exists for prize collecting TSP with deadlines problem. The second type of the problem is that all the nodes need to be visited before their deadlines with a minimum traveling cost. The added time constraint seems restrict the search space of solution, but it actually renders the problem even more difficult. Even finding a feasible path for such problems is NP-complete \cite{savelsbergh1985local}. Some exact algorithms for type II TSP with deadline problem have been proposed in \cite{savelsbergh1985local,langevin1993two,dumas1995optimal}. Compared with first type of TSP-deadline problem, even finding a feasible
path for the trivial case of the type II TSP-deadline problem is an NP-complete one.

\section{Network Model and Problem Formulation}
\label{Sec:Problem}

Before giving a formal definition of the $k$-coverage charging problem studied in the paper, we first introduce the wireless sensor network model employed in this research.

\subsection{Model of Wireless Sensor Network}

We assume a set of stationary sensor nodes,  $V = \{v_i | 1 \le i \le n\}$, deployed over a planar FoI with locations, $P = \{p_i | 1 \le i \le n\}$. For each sensor node $v_i$, we assume a disk sensing model with sensing range $r$. Specifically, denote $A$ the FoI, if the Euclidean distance between a point $q \in A $ and node position $p_i$ is within distance $r$, i.e., $||p_i - q||_{L_2} \le r$, then the point $q$ is covered by the sensor $v_i$, and we use $v_i(p) = 1$ to represent it, as shown in equation \eqref{eq:equation1}:
\begin{equation}
\label{eq:equation1}
v_i(q)=
\begin{cases}
1 & ||p_i - q||_{L_2} \le r \\
0 & \text{otherwise} \\
\end{cases}
\end{equation}

\begin{definition}[Full Coverage]
	If for any point $q \in A$, there exists at least one sensor node covering it, i.e., $\sum_i v_i(q) \ge 1$, then area $A$ is full covered.
\end{definition}

\begin{definition}[$K$-Coverage]
	If for any point $q \in A$, there exist at least $k \ge 1$ sensor nodes covering it, i.e., $\sum_i v_i(q) \ge k$, then area $A$ is $k$-covered.
\end{definition}

It is obvious that full coverage is a special case of $k$-coverage with $k=1$.

\subsection{Problem Statement}
 A sensor node $v_i$ is equipped with a rechargeable battery with capacity $B$. $B_i(t)$ denotes  the residual energy of sensor node $i$ at time $t$. Charger sends its departure time denoted as $t_0$ from service station to each sensor. When receiving the message, $v_i$ estimates its residual energy at $t_0$ denoted as $B_i(t_0)$. If it is less than an energy threshold $\alpha$, i.e., $B_i(t_0)/B\le\alpha$, $v_i$ sends a charging request $(id,p_i,D_i)$ to charger.
The request includes  the ID, position, and energy exhausted time of sensor $v_i$, denoted as $id$, $p_i$, and $D_i$, respectively. The estimated energy exhausted time, i.e., charging deadline, is estimated based on the residual energy at $t_0$ and an average battery consumption rate denoted as $\beta_i$. Specifically, $D_i = B_i(t_0)/\beta_i$. Note that nodes may have different energy consumption rates.

A mobile charger with an average moving speed $s$ is responsible for selecting and charging sensors sending requests. We assume that the time spent on charging path is less then the operation time of sensors, so a sensor node only needs to be charged once in each tour. Unless under an extremely dense sensor deployment, we consider that a charger charges sensor nodes one by one because the energy efficiency reduces dramatically with distance, e.g., the energy efficiency drops to 45 percent when the charging distance is $2$m ($6.56$ ft)\cite{kurs2007power}. The energy transfer rate of charger is denoted as $r_c$.

We consider a charging path scheduling and optimization problem. A mobile charger selects and charges a number of sensor nodes before their deadlines to guarantee  $k$-coverage of area $A$, and it seeks a path with a minimum energy consumption on traveling. Specifically, the charging time is defined as the following:

\begin{definition}[Charging Time] \label{def:charging}
	Denote $P$ a charging path and $t_P(v_i)$ the charging time along $P$ at node $v_i$. If $P$ goes from nodes $v_i$ to $v_j$, the charging time begins at node $v_j$ is

	\begin{equation}
	\label{eq:equation5}
	t_P(v_j) =
	\begin{cases}
	t_P(v_i) +\frac{B-B_i(t_P(v_i))}{r_c} + \frac{d_{ij}}{s} \\
    \quad \quad \quad \text{if } t_P(v_i) + \frac{B-B_i(t_P(v_i))}{r_c} + \frac{d_{ij}}{s} \le D_j \\
	\text{inf} \\
    \quad \quad \quad \text{otherwise}, \\
	\end{cases}
	\end{equation}

	where  $d_{ij}$ the Euclidean distance between nodes $v_i$ and $v_j$ and  $s$ is the average speed of a charger. The residual energy $B_i(t)$ is estimated as $B_i(t) = B_i(t_0) - \beta_i*(t-t_0)$.
\end{definition}

\subsection{Problem Formulation}
The optimal $k$-coverage charging problem can be formulated as follows.

\begin{definition} [Optimal $k$-coverage charging problem]
Given a set of sensor nodes $V = \{v_i | 1 \le i \le n\}$, randomly deployed over a planar region A with locations $P = \{p_i | 1 \le i \le n\}$ such that every point of $A$ has been at least $k$ covered initially, the optimal $k$-coverage charging problem is to schedule a charging path $P$

\begin{subequations}\label{eq:equation2}
	\begin{align}
 &\min~ |P| \\
	&{\rm s.t.}~~ \sum_{i=1}^n v_i(q) \ge k, \forall q \in A.\label{eq:equation2b}\\
	&\quad\quad t_P(v_i)\leq D_i, \forall v_i \in P.
	\end{align}
\end{subequations}

\end{definition}
Note that a charger does not need to respond all the nodes sending requests.

\subsection{Problem Hardness}
\label{Sec:hardness}

We prove the NP-hardness of optimal $k$-coverage charging problem below.
\begin{theorem}{}
The optimal $k$-coverage charging problem is NP-hard.
\end{theorem}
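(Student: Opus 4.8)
The plan is to prove NP-hardness by restricting the problem to a narrow special case for which even deciding whether \emph{any} feasible path exists is already NP-complete; since an algorithm that minimizes $|P|$ must in particular detect whether a feasible charging path exists at all (the optimum is finite exactly when one does), NP-hardness of the optimization problem then follows. The special case I would use is the one in which the initial deployment is exactly $k$-covered in a neighborhood of every requesting sensor, so that \emph{no} requesting node may be left uncharged: if some requesting $v_i$ is not recharged it is exhausted at time $D_i$ and the coverage of a point near $p_i$ drops to $k-1$, violating \eqref{eq:equation2b}. In this regime the $k$-coverage constraint collapses to ``the path must visit every requesting node'', and the problem becomes exactly the (type~II) traveling salesman problem with deadlines --- find a shortest path through a prescribed planar point set reaching each point before its own deadline --- whose feasibility version is NP-complete~\cite{savelsbergh1985local}.

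To make this concrete and self-contained in the Euclidean plane, I would reduce from the Hamiltonian path problem on grid graphs, which is NP-complete and whose instances are already planar with unit-distance edges. Given a grid graph $G$ on $n$ vertices with a prescribed start vertex $g^\star$, scale every coordinate by a small $\delta>0$ so the vertex set fits inside a disk of radius less than the sensing range $r$ about a center $o$; place one sensor at each scaled vertex, put the service station at the scaled image of $g^\star$, take $t_0=0$, and let the FoI $A$ be a tiny disk about $o$. Every sensor then covers all of $A$, so $A$ is exactly $n$-covered, and setting $k=n$ forces the charger to recharge all $n$ sensors. Choose residual energies $B_i(0)<\alpha B$ so each sensor issues a request, set $\beta_i=B_i(0)/D$ for a common target deadline $D$, take the charging rate $r_c$ large enough that the recharging terms $\tfrac{B-B_i(t_P(v_i))}{r_c}$ sum to less than a tiny rational $\epsilon'$, and fix the speed $s$; then by Definition~\ref{def:charging} the arrival time $t_P(v_j)$ equals the traveled length up to $v_j$ divided by $s$, up to an additive error below $\epsilon'$. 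Finally set $D=(n-1)\delta/s+\epsilon'$ with $\epsilon'<(\sqrt2-1)\delta/s$.

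Any path that visits all $n$ sensors starting at the station uses $n-1$ hops, each of Euclidean length at least $\delta$ and equal to $\delta$ only between $G$-adjacent vertices (the next smallest inter-vertex distance is $\sqrt2\,\delta$), so its length is at least $(n-1)\delta$, with equality iff it is a Hamiltonian path of $G$ rooted at $g^\star$. With the chosen $\epsilon'$ the last sensor is reached within the common deadline $D$ precisely in that case, and all earlier sensors are reached strictly earlier; hence the constructed charging instance is feasible iff $G$ has a Hamiltonian path from $g^\star$. The transformation is clearly polynomial, so deciding feasibility of the optimal $k$-coverage charging problem is NP-hard (indeed NP-complete, since a feasible path is a short, easily checked certificate), and therefore so is the optimization problem; equivalently, the minimum $|P|$ coincides with the shortest deadline-feasible path through all points, which is the type~II TSP-with-deadline hardness referred to in Section~\ref{Sec:Related}.

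The conceptual hardness is inherited wholesale from grid-graph Hamiltonicity (equivalently, TSP with deadlines), so the only parts that need care --- and the main obstacle to get exactly right --- are the two gadget-level facts: that the tiny-disk FoI with $k=n$ really does force every requesting node onto the path while keeping the deployment initially $k$-covered, and that the recursive arrival-time rule of Definition~\ref{def:charging} faithfully degenerates to plain travel time. Both are handled by the scaling above: a small $\delta$ for the geometry, a large $r_c$ for the time accounting, and a deadline slack $\epsilon'$ below the $(\sqrt2-1)\delta/s$ margin to absorb the residual recharging terms. Whether the charging path must be an open path or must return to the station is immaterial --- the same construction works from the rooted Hamiltonian-path or the Hamiltonian-cycle version of the grid-graph problem, both NP-complete.
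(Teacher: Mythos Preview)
Your proof is correct and shares the paper's core idea: restrict to a special case in which the $k$-coverage constraint forces \emph{every} requesting sensor onto the charging path, so that feasibility becomes the type~II TSP-with-deadlines problem of~\cite{savelsbergh1985local}. The paper realizes this by taking $k=1$ with no coverage redundancy, whereas you take the opposite extreme, $k=n$ with a tiny FoI that every sensor covers; both instantiations achieve the same collapse of constraint~\eqref{eq:equation2b} to ``visit all requesting nodes''. Where you go beyond the paper is in supplying a fully self-contained reduction from grid-graph Hamiltonian path, with explicit choices of $\delta$, $r_c$, $\epsilon'$, and the common deadline $D$ so that Definition~\ref{def:charging} degenerates to pure travel time and the $(\sqrt2-1)\delta/s$ gap separates Hamiltonian from non-Hamiltonian instances. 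The paper instead simply cites~\cite{savelsbergh1985local} for the hardness of the deadline-TSP subproblem. Your version is thus more rigorous and independent of that citation, at the cost of more bookkeeping; the paper's version is shorter but leans entirely on the black-box hardness of deadline-TSP.
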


\begin{proof}To prove the NP-hardness of the optimal $k$-coverage charging problem, we prove that the NP-hard problem: type II Traveling Salesman Problem with Deadline can be reduced to the trivial case of the optimal $k$-coverage charging problem in polynomial time. This type of Traveling Salesman Problem is that all the nodes need to be visited before their deadlines with a minimum traveling cost.
	
We consider a trivial case of the optimal $k$-coverage charging problem: we require $k=1$ and assume that the initial deployment of sensor nodes has no coverage redundancy. A charger needs to charge all the sensor nodes sending requests before their deadlines. It is straightforward to see that the solution of type II of Traveling Salesman Problem with Deadline is also the solution of the trivial case of the optimal $k$-coverage charging problem and vice versa. Since even finding a feasible path for this type of Traveling Salesman Problem with Deadline is NP-complete \cite{savelsbergh1985local},  the optimal $k$-coverage charging problem is NP-hard.

\end{proof}

\section{Problem Discretization}
\label{Sec:Discretization}

Given a sensor network with $n$ sensor nodes randomly deployed over a FoI, we assume the network with a reasonable density such that the area is at least $k$-coveraged initially. 

\subsection{Area Segmentation}
\label{sec:area_segmentation}

The sensing range of a sensor node $v_i$ is a disk-shape region centered at $p_i$ with radius $r_i$. These disk-shape sensing regions of a network divide a planar FoI $A$ into a set of subregions, marked as $A = \{a_i | 1 \le i \le m\}$. Then $\sum_{i=1}^n v_i(a_i)$ is the number of sensors with $a_i$ within their sensing ranges. As the definition of the optimal $k$-coverage charging problem, we assume $\sum_{i=1}^n v_i(a_i) \ge k$ in the initial deployment of a network. Denote $r(a_i)$  the number of sensor nodes sending charging requests among the $\sum_{i=1}^n v_i(a_i)$ ones. Three cases exist for subregion $a_i$:

\textbf{Case I: } $\sum_{i=1}^n v_i(a_i) - r(a_i) \ge k$: a charger may ignore all the requests.

\textbf{Case II: } $\sum_{i=1}^n v_i(a_i)= k$ and $\sum_{i=1}^n v_i(a_i) - r(a_i) < k$: a charger needs to charge all the sensor nodes sending requests with sensing ranges containing area $a_i$  within their charging windows.

\textbf{Case III: } $\sum_{i=1}^n v_i(a_i)> k$ and $\sum_{i=1}^n v_i(a_i) - r(a_i) < k$: a charger may choose to charge only $ k - \sum_{i=1}^n v_i(a_i) + r(a_i) $ sensor nodes within their charging windows among those sending requests with sensing ranges containing area $a_i$ .

A table denoted as $T$ with size $m$, is constructed to store the minimum number of sensors to charge for each $a_i$. Specifically, $T[i] = k - \sum_{i=1}^n v_i(a_i) + r(a_i)$. If the value is negative, we simply set $T[i]$ to zero.

\subsection{Time Discretization and Graph Construction} \label{sec:General_Graph_Construction}

For a sensor node $v_i$ with a charging request sent out, we divide its time window $[t_0, t_0+D_i]$ into a set of time units $\{t_i^k | 0 \le k \le D_i\}$, where $t_i^0 = t_0$ and $t_i^{D_i} = D_i$. We represent node $v_i$ with a set of discretized nodes $\{v_i(t_i^k) | 0 \le k \le D_i\}$, where $v_i(t_i^k)$ represents Node $v_i$ at time $t_i^k$. 

We then construct a directed graph denoted as $G$ with vertices and edges defined as follows.

\textbf{Vertices.} The vertex set $V(G)$ includes the discretized sensor nodes sending charging requests, i.e., $\{v_i(t_i^k) |  0 \le k \le D_i\}$.

\textbf{Edges.}
There exists a directed edge $\overrightarrow{v_i(t_i^{k})v_j(t_j^{k'})}$ from $v_i(t_i^{k})$ to $v_j(t_j^{k'})$ in the edge set $E(G)$ if and only if
\[
t_i^{k}+\frac{B-B_i}{r_c} + \frac{d_{ij}}{s} > t_j^{k'-1}, 
\]
\[
t_i^{k}+\frac{B-B_i}{r_c}+\frac{d_{ij}}{s}\le t_j^{k'},
\]
where $k' > 0$.

A directed edge ensures that a charger arrives at sensor node $v_j$ before its deadline. The charging time is the arrival time of the charger.

\begin{theorem}{}
	$G$ is a directed acyclic graph (DAG).
\end{theorem}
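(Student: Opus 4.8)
The plan is to produce a real-valued ``level'' function on the vertices of $G$ that strictly increases along every directed edge; once such a function exists, $G$ cannot contain a directed cycle. The natural choice is the time stamp itself: to the discretized vertex $v_i(t_i^{k})$ assign $\tau\big(v_i(t_i^{k})\big)=t_i^{k}$, read as an absolute time. This is well defined across different sensors because every node shares the same origin $t_i^{0}=t_0$, so all the $t_i^{k}$ sit on one common time axis.

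The heart of the argument is the claim that if $\overrightarrow{v_i(t_i^{k})v_j(t_j^{k'})}\in E(G)$ then $t_j^{k'} > t_i^{k}$. The second defining inequality of an edge already gives $t_j^{k'}\ge t_i^{k}+\frac{B-B_i}{r_c}+\frac{d_{ij}}{s}$, so it remains to check that the increment $\frac{B-B_i}{r_c}+\frac{d_{ij}}{s}$ is strictly positive. Here I would appeal to the modeling assumptions: a sensor issues a charging request only after its residual energy falls below $\alpha B$ with $\alpha<1$, and residual energy never increases between charges, so when the charger reaches $v_i$ we still have $B_i<B$ and hence $\frac{B-B_i}{r_c}>0$ (alternatively one may use $d_{ij}>0$ for distinct physical sensors); in either case the increment exceeds $0$, so the time stamp strictly increases across the edge.

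With strict monotonicity in hand, suppose toward a contradiction that $G$ contained a directed cycle $u_1\to u_2\to\cdots\to u_\ell\to u_1$. Chaining the inequality along the cycle yields $\tau(u_1)<\tau(u_2)<\cdots<\tau(u_\ell)<\tau(u_1)$, which is absurd. Therefore $G$ has no directed cycle and is a DAG. I expect the only subtle point to be establishing strict (rather than merely weak) positivity of $\frac{B-B_i}{r_c}+\frac{d_{ij}}{s}$, since a weakly increasing time stamp would not by itself forbid cycles among vertices sharing the same time value; everything after that is a one-line chaining argument.
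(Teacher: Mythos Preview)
Your proof is correct and follows essentially the same approach as the paper: both use the time stamp $t_i^{k}$ as a strictly increasing potential along directed edges and derive a contradiction from any putative cycle. Your version is in fact more careful than the paper's, which simply asserts ``it is obvious that $t_i^k < t_j^{k'}$'' without justifying the strict positivity of $\frac{B-B_i}{r_c}+\frac{d_{ij}}{s}$ that you rightly flag as the only subtle point.
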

\begin{proof}
	Suppose there exists a cycle in $G$. Assume vertices $v_i(t_i^k)$ and $v_j(t_j^{k'})$ are on the cycle. Along the directed path from $v_i(t_i^k)$ to $v_j(t_j^{k'})$, it is obvious that $t_i^k < t_j^{k'}$. However, along the directed path from $v_j(t_j^{k'})$ to $v_i(t_i^k)$, we have $t_j^{k'} < t_i^k$. Contradiction, so $G$ is a directed acyclic graph.
\end{proof}

\begin{definition}[Clique]
	A set of nodes $\{v_i(t_i^k) | 0 \le k \le D_i\}$ in $G$ is defined as a clique if they correspond to the same  node $v_i$ at different time units.
\end{definition}

\begin{definition}[Feasible Path]
	A path $P$ in $G$ is a feasible one if it passes no more than one vertex of a clique. At the same time, charging along $P$ satisfies the $k$-coverage requirement of the given network.
\end{definition}

\section{Dynamic Programming Algorithm}
\label{Sec:Algorithm1}

Considering that the constructed graph $G$ is a DAG that can be topologically sorted, we design a dynamic programming algorithm to find an optimal charging path for the $k$-coverage charging problem.

To make sure that the computed charging path passes no more than one discretized vertex of a sensor node, we apply the color coding technique introduced in~\cite{alon1995color} to assign each vertex a color. Specifically, we generate a coloring function $c_v: V \rightarrow \{1,...,n\}$ that assigns each sensor node a unique node color. Each sensor node then passes its node color to its discretized ones. A path in $G$ is said to be colorful if each vertex on it is colored by a distinct node color. It is obvious that a colorful path in $G$ passes no more than one discretized vertex of a sensor node.

To take into the consideration of traveling distance from service station to individual sensor node, we add an extra vertex denoted as $v_0$ and connect it with directed edges to vertices in $G$, i.e., $\{ v_i(t_i^k) | k == 0 \}$. The length of edge $\overrightarrow{v_0v_i(t_i^0)}$ is the Euclidean distance between the service station and sensor node $v_i$. The table $T$ constructed in Sec.~\ref{sec:area_segmentation} is stored at $v_0$.

We first topologically sort the new graph, i.e., $G + v_0$. Then we start from $v_0$ to find colorful paths by traversing from left to right in linearized order. Specifically,  $v_0$ checks neighbors connected with outgoing edges and sends table $T$ to those contributing to the decrease of at least one table entry. Once a vertex $v_i(t_i^0)$ receives $T$, $v_i(t_i^0)$ checks the subregions within its sensing range and updates the corresponding entries of $T$. $v_i(t_i^0)$ also generates a color set $C=\{ c(v_i(t_i^0)) \}$ and stores with $T$, which indicates a colorful path of length $|C|$.

Similarly,  suppose the algorithm has traversed to vertex $v_i(t_i^k)$, we check each color set $C$ stored at $v_i(t_i^k)$ and its outgoing edge $v_j(t_j^{k'})$. If $c(v_j(t_j^{k'})) \not\in C$ and charging $v_j(t_j^{k'})$ helps decrease at least one entry of $T$ associated with $C$, we add the color set $C = \{ C + c(v_j(t_j^{k'}))\}$ along with the updated $T$ to the collection of $v_j(t_j^{k'})$.

After the update of the last vertex  in linearized order, we check the stored $T$s in each node and identify those with all zero entries. A color set associated with a $T$ with all zero entries represents a colorful path that is a feasible solution of the $k$-coverage problem.

A path can be easily recovered  from a color set. The basic idea is to start from vertex $v_i(t_i^k)$ with a color set $C$. We check the stored color sets of vertices connected to $v_i(t_i^k)$ with incoming edges. Assume we identify a neighbor node $v_j(t_j^{k'})$ storing a color set $C-c(v_i(t_i^k))$, then we continue to trace back the path from $v_j(t_j^{k'})$ with a color set $C-c(v_i(t_i^k))$. When we trace back to $v_0$, we have recovered the charging path. Among all feasible charging paths, the one with a minimal traveling distance is the optimal one.

\begin{lemma}{}
	The algorithm returns an optimal solution of the $k$-coverage charging problem, i.e., a feasible path maximizing the energy usage efficiency, if it exists.
\end{lemma}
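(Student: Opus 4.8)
The plan is to prove the lemma in the two standard directions for a dynamic program over the DAG $G+v_0$: \emph{soundness}, that every route the trace-back reconstructs is a feasible charging path, and \emph{completeness with optimality}, that for every feasible charging path there is an enumerated route of no greater traveling distance and that the concluding comparison returns the smallest such distance. Three structural correspondences carry most of the work. First, since the edge set of $G$ is by construction exactly the set of time-consistent transitions of Definition~\ref{def:charging} (an edge into $v_j(t_j^{k'})$ exists only when the charger's arrival at $v_j$ falls within one of $v_j$'s time units, the last of which is its deadline), any directed path out of $v_0$ is automatically a charging route that respects every visited node's deadline, and conversely every deadline-respecting route is realized by such a path up to the conservative rounding of arrival times to grid points. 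Second, because $c_v$ assigns each sensor node a distinct color and all discretized copies of a node inherit it, a \emph{colorful} path meets at most one vertex of each clique, which is the first clause of the feasible-path definition. Third, the residual table $T$ carried along a path is updated so that at every vertex the entry $T[\ell]$ equals the initial deficit of subregion $a_\ell$ decremented (floored at $0$) by the number of already-charged path nodes whose sensing range contains $a_\ell$; hence $T\equiv 0$ is exactly the $k$-coverage constraint~\eqref{eq:equation2b}. I will additionally keep with each stored pair $(C,T)$ the minimum traveling distance among paths realizing it, breaking ties toward the cheaper path --- a routine relaxation since vertices are processed in topological order --- so that the final ``pick the minimum traveling distance'' step is well defined.

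For soundness I would induct on the linearized order, maintaining the invariant: whenever $(C,T')$ with stored distance $d'$ sits at $v_i(t_i^k)$, there is a colorful directed path from $v_0$ to $v_i(t_i^k)$ whose node-color set is $C$, along which the table started at $T$ and decremented as above equals $T'$, and whose length is $d'$, with $d'$ minimal over all such paths. The base case is the set of edges leaving $v_0$, and the inductive step is precisely the update rule, which appends one color not yet in $C$, decrements the entries of the subregions covered by the new node, and adds the incident edge length. Therefore any stored pair with $T'\equiv 0$ certifies a colorful, deadline-feasible path that also satisfies $k$-coverage, i.e.\ a feasible path, and the trace-back --- peeling one color at a time along incoming edges, following the edge that attained the stored minimum --- reconstructs such a path with that exact cost. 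Every candidate the algorithm reports is thus feasible with an attainable cost.

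For completeness and optimality I would start from an arbitrary feasible path $P^{\ast}$ and first put it into a ``no redundant node'' normal form. If some charged node $v_j$ on $P^{\ast}$ fails the propagation test, i.e.\ by the time $v_j$ is reached every subregion inside its sensing range already has deficit $0$, then deleting $v_j$ (a) preserves $k$-coverage, because those subregions were already covered without $v_j$ and removing a node never hurts any other subregion; (b) preserves all deadlines, because skipping a charge only lets the charger arrive at every later node earlier; and (c) does not increase the traveling distance, by the triangle inequality in the Euclidean plane. Iterating produces a feasible path $\hat P$ with $|\hat P|\le|P^{\ast}|$ all of whose nodes pass the propagation test, and $\hat P$ is optimal whenever $P^{\ast}$ is. A second induction along $\hat P$ then shows each prefix of $\hat P$ is enumerated: at every step the appended node carries an unused color and strictly decreases a table entry, so the algorithm does create and forward the corresponding $(C,T)$ pair. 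In particular the pair $(C_{\hat P},0)$ is stored at the endpoint of $\hat P$, so $\hat P$ is recognized as feasible; by the distance bookkeeping the concluding minimization returns a cost at most $|\hat P|\le|P^{\ast}|$. Combined with soundness, the reported path is feasible with minimum traveling distance, which is the lemma; the same normal-form reduction also shows the algorithm finds a feasible path whenever one exists.

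The step I expect to be the main obstacle is the normal-form reduction, since it is the only place where one must certify that the algorithm's pruning --- forwarding $T$ only to neighbors that strictly reduce some entry --- cannot discard the optimum. Its two load-bearing facts are that excising a coverage-redundant node cannot violate a later deadline (resting on the monotonicity that earlier arrivals are always admissible under Definition~\ref{def:charging}) and that such an excision cannot lengthen the tour (using the metric/triangle-inequality structure of the Euclidean plane), together with the observation that $(C,T)$ augmented with the best accumulated distance is a sufficient statistic for the remaining decisions. Everything else is bookkeeping once the acyclicity of $G$ and the edge--time correspondence are in hand.
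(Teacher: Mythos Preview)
Your argument is substantially more complete than the paper's own proof, which consists of two brief assertions: that any returned path is feasible (soundness), and that ``when the algorithm has traversed to the $i^{th}$ node in linearized order, each colorful path passing through the node has been stored in the node'' (completeness). The paper does not justify the second assertion and, read literally, it is false: the algorithm propagates a pair $(C,T)$ from $v_i(t_i^k)$ to a neighbor $v_j(t_j^{k'})$ only when charging $v_j$ strictly decreases some entry of $T$, so a colorful path containing a coverage-redundant node is \emph{not} stored. Your normal-form reduction is exactly the missing ingredient: you show that any feasible path can be pruned of such redundant nodes without increasing length (triangle inequality in the Euclidean plane) or violating later deadlines (monotonicity of arrival times under Definition~\ref{def:charging}), and the resulting pruned path \emph{is} enumerated by the propagation rule. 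This is a genuine idea absent from the paper's proof.

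Your addition of per-$(C,T)$ minimum-distance bookkeeping is likewise needed to make the final ``pick the minimum traveling distance'' step well defined, since two colorful paths ending at the same discretized vertex with the same color set can have different lengths, and the paper's trace-back does not by itself select the cheaper one; the paper is silent on this point. Your observation that $(C,T)$---indeed just $C$, since $T$ is a function of the set of charged sensors---together with the current discretized vertex is a sufficient statistic for all future extensions is what licenses this relaxation in topological order. In short, your proof is sound and closes gaps the paper leaves open; the two agree on the soundness direction, but on completeness your normal-form argument does real work that the paper's one-line assertion does not.
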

\begin{proof}
	We first show that the algorithm returns a feasible path. A path returned by the algorithm is a colorful one that guarantees the path passes a sensor node no more than once. In the meantime, charging time at each sensor node along the path is before its deadline, otherwise a directed edge along the path won't exist. Array $T$ with all zero entries makes sure that the $k$-coverage is maintained.
	
	When the algorithm has traversed to the $i^{th}$ node in linearized order, each colorful path passing through the node has been stored in the node.
\end{proof}

Note that the computational complexity of the dynamic programming algorithm can increase exponentially in the worst case because the stored color sets at a vertex can increase exponentially to the size of sensor nodes n.

\section{Deep Q-Learning Algorithm}\label{Sec:Algorithm2}

\subsection{Motivation}

A combinatorial optimization problem searches for an optimal solution of an objective function under a set of constraints. The domain of the objective function is discrete, but prohibitive for an exhaustive search. An optimal solution is a feasible one satisfying the set of constraints and minimizing the value of the objective function.

Previous heuristic or mixed-integer and constraint programming approaches to combinatorial optimization problems need to identify domain-specific heuristics, while reinforcement learning algorithms discover domain-specific  control policies automatically by exploring different solutions and evaluating their qualities as rewards. These rewards are then provided as feedback to improve the future control policy~\cite{sutton1998reinforcement}.

The optimal $k$-coverage charging problem is a combinatorial optimization one. We introduce Deep Q-learning, a reinforcement learning algorithm to tackle the optimal $k$-coverage charging problem. ``Q" stands for the quality or reward of an action taken in a given state. In Q-learning, an agent maintains a state-action pair function stored as a Q-table $Q[S,A]$ where $S$ is a set of states and $A$ is a set of actions. A Q-value of the table estimates how good a particular action will be in a given state, or what reward the action is expected to bring.

Q-Learning is a model-free reinforcement learning algorithm, so there is no need to find all the combinations to check the existence of a solution. An agent will choose next state based on current one and stored state-action rewards, requiring less computation and storage space comapred with model-based reinforcement learning algorithms. Q-Learning is a also temporal-difference reinforcement learning algorithm. An agent  can learn online after every step, and even from an incomplete sequences (a situation that leads to unfeasible solution). In theory, Q-learning has been proven to converge to the optimal Q-function for an arbitrary target policy given sufficient training~\cite{ConvergenceQ-learning}.

Deep Q-Learning applies Deep Neural Network (DNN) to approximate the Q-function, i.e., a deep neural network that takes a state and approximates the Q-value for each state-action pair. It can handle the situation when the number of states of a problem is huge. It is also faster to learn the reward value of each state-action pair than Q-learning. Therefore, deep Q-learning can handle more complicated problems compared with Q-Learning.

An end-to-end deep Q-learning framework introduced in~\cite{khalil2017learning} automatically learns greedy heuristics for hard combinatorial optimization problems on graphs. We modify the framework in~\cite{cheng2007modified} and combine with deep Q-learning in~\cite{mnih2013playing} to tackle the optimal $k$-coverage charging problem.

\subsection{Graph Construction}\label{General_Graph_Construction}
We first construct a directed graph denoted as $G$ as input for the Deep Q-Learning algorithm. Notice that the reason that we reconstruct graph is that sensor nodes do not need to be discretized by using DQN because agent will know the exact time when it comes to the node and it could decide to go ahead or abandon the path after checking the current residual energy $B_i(t)$ and the deadline constraint. This graph construction can reduce the number of action of the DQN and reduce the training and test time. The vertices and edges of $G$ are defined as follows.

\textbf{Vertices.}
The vertex set $V(G)$ includes all the sensor nodes, i.e.,  $\{v_i | 1 \le i \le n\}$ and a start vertex denoted as $v_0$. Each vertex $v_i$ has a deadline $D_i$. The location of $v_0$ is the service station and the deadline of $v_0$ is $\text{inf}$. Note that the deadline of sensor nodes without sending any request is set as $0$.

\textbf{Edges.}
For any sensor nodes $v_i$ and $v_j$ in $V(G)$, $d_{ij}$ is the euclidean distance between $v_i$ and $v_j$. There exists an edge $\overrightarrow{v_iv_j}$ in $V(G)$ if and only if the inequality below holds
\begin{equation}
\label{eq:equation4}
\frac{B-B_i(t_0)}{r_c}+\frac{d_{ij}}{s}\le D_j
\end{equation}
where $B_i(t_0)$ denotes as the residual energy of sensor node $i$ at charger departure time $t_0$, $r_c$ is the energy transfer rate of charger and $s$ is the average speed of a charger. In this way, we can make sure that all the feasiable path exist in the graph, but if the path valid or not need to be checked according the time $t$ when agent comes to the node. Because compared with Def.~\ref{def:charging}, this edge construction do not consider the $t_P(v_i)$ and only consider the initial energy $B_i(t_0)$ instead of $B_i(t_P(v_i))$ which related with the path $P$. If a charging path $P$ goes from node $v_i$ to node $v_j$, the charging time beginning at node $v_j$ is given by Def.~\ref{def:charging}.

\begin{definition}[Feasible Path]
	A path $P$ in $G$ is a feasible one if it starts from and ends at $v_0$, and has no repeated vertex and the charging time at each vertex is not inf. At the same time, charging along $P$ satisfies the $k$-coverage requirement of the given network.
\end{definition}

\subsection{Deep Q-Learning Formulation}\label{Deep Q-Learning Formulation}

 We define the states, actions, rewards, and stop function in the deep Q-learning framework as follows:

\textbf{States: }  A state $S$ is a partial solution $S\subseteq V(G)$, an ordered list of visited vertices. The first vertex in $S$ is $v_0$.

\textbf{Actions: } Let $\bar{S}$ contain vertices not in $S$ and has at least one edge from vertices in $S$. An action is a vertex $v$ from $\bar{S}$ returning the maximum reward. After taking the action $v$, the partial solution $S$ is updated as
\begin{equation}
S' := (S, v), \text{where } v = \arg\max_{v\in\bar{S}}Q(S,v)
\end{equation}
$(S,v)$ denotes appending $v$ to the best position after $v_0$ in $S$ that introduces the least traveling distance and maintains all vertices in the new list a valid charging time.

\textbf{Rewards: } The reward function $R(S,v)$ is defined as the change of the traveling distance when taking the action $v$ and transitioning from the state $S$ to a new one $S'$. Assume $v_i$, $v_j$ are two adjacent vertex in $S$, $v_0$ is the first vertex in the $S$, and $v_t$ is the last vertex in the $S$.

The reward function $R(S,v_k)$ is defined as follows:
\begin{equation}
\label{eq:equation7}
R(S,v_k) =
\begin{cases}
-\min({d_{ik}}+{d_{kj}}-{d_{ij}},{d_{tk}}+{d_{k0}}- {d_{t0}}) & t_{S'}(v) \neq \text{inf}\\
-\text{inf} & \text{otherwise} \\
\end{cases}
\end{equation}
where $d_{ij}$ is the euclidean distance between nodes $v_i$ and $v_j$. $t_{S'}(v)$ stands for the new charging time of each node in path formed by $S'$ after inserting the $v_k$. For example, $t_{S'}(v_k)$, $t_{S'}(v_t)$ are the charging time of $v_k$, $v_t$ by the Def.~\ref{def:charging} along the charging path formed by $S'$ after inserting the $v_k$ to $S$, respectively.

\textbf{Stop function: }: The algorithm terminates when the current solution satisfies the $k$-coverage requirement or  R(S,v) is -inf.

\subsection{Deep Q-Learning Algorithm}
Algorithm~\ref{DQN_algorithm} summarizes the major steps of our algorithm. Briefly, $Q(S,v;\Theta)$ is parameterized by a deep network with a parameter $\Theta$ and learned by one-step Q-learning. At each step of an episode, $\Theta$ is updated by a stochastic gradient descent to minimize the squared loss:

\[
(y_l - Q(S_l,v_l;\Theta))^2
\]
where
\begin{equation}\label{eq:q_learning}
y_l =
\begin{cases}
R(S_l,v_l)+\gamma\max_{v'} Q(S_{l+1},v';\Theta)\\
\quad \quad \quad \text{if } S_{l+1} \text{ non-terminal and } v'\in \bar{S_l} \\
R(S_l,v_l) \\
\quad \quad \quad \text{otherwise}
\end{cases}
\end{equation}

We use experience replay method in~\cite{mnih2013playing} to update $\Theta$, where the agent's experience in each step is stored into a dataset. When we update $\Theta$, we sample random batch from dataset that is populated from previous episodes. The benefits include increasing data efficiency, reducing correlations between samples, and avoiding oscillations with the parameters.
\begin{algorithm}
	\caption{Deep Q-learning Algorithm}
	\begin{algorithmic}[1]
		\STATE Initialize replay memory $\mathcal{H}$ to capacity $C$
		\FOR{ each episode }
		\STATE Initialize state $S_1 = (v_0)$
		\FOR{ step $m = 1$ \TO $M$ }
		\STATE Select  $v_m = \arg\max_{v\in\bar{S}_m}Q(S_m,v;\Theta)$ with probability $1-\epsilon$
		\STATE Otherwise select a random vertex $v_m\in\bar{S}_m$
		\STATE Add $v_m$ to partial solution: $S_{m+1}:=(S_m,v_m)$
		\STATE Calculate reward $R(S_m,v_m)$ by \eqref{eq:equation7}
		\STATE Store tuple $(S_{m},v_{m},R(S_m,v_m),S_{m+1})$ to $\mathcal{H}$
		\STATE Sample random batch $(S_l,v_l,R(S_l,v_l),S_{l+1})$ from $\mathcal{H}$
		\STATE Update the network parameter $\Theta$ by \eqref{eq:q_learning}
		\IF{$S_{m+1}$ satisfy the stop function}
		\STATE Break
		\ENDIF
		\ENDFOR
		\ENDFOR		
	\end{algorithmic}
	\label{DQN_algorithm}
\end{algorithm}

\section{Performance Evaluation}\label{Sec:Simulation}
\subsection{Simulation Settings}

We set up an Euclidean square $[500, 500]m^2$ as a simulation area and randomly deploy sensor nodes ranging from $32$ to $80$ in the square such that the area is at least $k$-coverage initially where $k$ varies from $2$ to $4$. The sensing range $r$ is $135$m. The base station and service station of charger are co-located in the center of the square. A charger with a starting point from the service station has an average traveling speed $5$m/s and consumes energy $600J/m$ \cite{liang2017approximation}. The battery capacity $B$ of each sensor is $10.8$KJ \cite{liang2017approximation}. The remaining energy threshold $\alpha$ vary from $0.2$ to $0.8$. A sensor sends a charging request before the leaving of the charger from the service station. The sensor will include in the request the estimated energy exhausted time based on its current residual energy and energy consumption rate. To simulate such request, we consider the residual battery of a sensor is a uniform random variable $B_i$ between $(0.54,10.8]$KJ \cite{shi2011renewable}. The energy exhausted time $D_i = B_i/\beta_i$. We choose the energy consumption rate $\beta_i$ from the historical record of real sensors in~\cite{Zhu:2009:LES:1555816.1555849} where the rate is varying according to the remaining energy and arrival charging time to the sensor. The energy transfer rate $r_c$ is $20$W \cite{chen2016charge}. The discredited time stepsize is $1$s.

We add three heuristic algorithms including Ant Colony System (ACS) based algorithm, Random algorithm, and Greedy algorithm for comparison. Sec.~\ref{Sec:Comparison_Algorithms} explains the implementations of the three algorithms in detail.

Traveling energy and computing time are important metrics to evaluate the performances of different algorithms. Network settings may also affect the performance. Therefore, we study the impact of the three parameters of network setting: the coverage requirement $k$, the size of sensor network $n$, and the remaining energy threshold $\alpha$ in Secs.~\ref{Sec:K},~\ref{Sec:Size}, and~\ref{Sec:Percentage}, respectively.

\subsection{Comparison Algorithms}\label{Sec:Comparison_Algorithms}

We implement three  heuristic algorithms for comparison: Ant Colony System (ACS) based algorithm, Random algorithm, and Greedy algorithm.

ACS algorithm solves the traveling salesmen problem with an approach similar to the foraging behavior of real ants~\cite{ant1991,dorigo1997ant,Gutjahr:2000:GAS:348599.348602}. Ants seek path from their nest to food sources and leave a chemical substance called pheromone along the paths they traverse. Later ants sense the pheromone left by earlier ones and tend to follow a trail with a stronger pheromone. Over a period of time, the shorter paths between the nest and food sources are likely to be traveled more often than the longer ones. Therefore, shorter paths accumulate more pheromone, reinforcing these paths.

Similarly, ACS algorithm places agents at some vertices of a graph. Each agent performs a series of random moves from current vertex to a neighboring one based on the transition probability of the connecting edge. After an agent has finished its tour, the length of tour is calculated and the local pheromone amounts of edges along the tour are updated based on the quality of the tour. After all agents have finished their tours, the shortest one is chosen and the global pheromone amounts of edges along the tour are updated. The procedure continues until certain criteria are satisfied. When applying ACS algorithm to solve the the traveling salesmen problem with deadline, two local heuristic functions are introduced in~\cite{cheng2007modified} to exclude paths that violate the deadline constraints.

We modify ACS algorithm introduced in~\cite{cheng2007modified} for the optimal $k$-coverage charging problem. Agents start from and end at $v_0$.  Denote $\tau_{ij}(t)$ the amount of global pheromone deposited on edge $\overrightarrow{v_iv_j}$ and $\Delta\tau_{ij}(t)$ the increased  amount at the $t^{th}$ iteration. $\Delta\tau_{ij}(t)$ is defined as
\begin{equation}
\label{eq:Deltatau_ij}
\Delta\tau_{ij}(t) =
\begin{cases}
\frac{1}{L^*}  & \text{if } \overrightarrow{v_iv_j} \in P^*\\
0 & \text{otherwise} \\
\end{cases}
\end{equation}
where $L^*$ is the traveling distance of the shortest feasible tour $P^*$ at the $t^{th}$ iteration. Global pheromone $\tau_{ij}(t)$ is updated according to the following equation:
\begin{equation}
\label{eq:tau_ij}
\tau_{ij}(t) = (1-\theta)\tau_{ij}(t-1) + \theta\Delta\tau_{ij}(t),
\end{equation}
where $\theta$ is the global pheromone decay parameter. The local pheromone is updated in a similar way, where $\theta$ is replaced by a local pheromone decay parameter and $\Delta\tau_{ij}(t)$ is set as the initial pheromone value.

We also modify the stop criteria of one agent such that the traveling path satisfies the requirement of $k$-coverage, or the traveling time of current path is inf, or the agent is stuck at a vertex based on the transition rule.

Random and Greedy algorithms work much more straightforward. The Random algorithm randomly chooses a next node not in the same clique of the existing path and with an outgoing edge from current one to charge. The Greedy algorithm always chooses the nearest node not in the same clique of the existing path with an outgoing edge from current one. Random and Greedy algorithms terminate when they either find a feasible path or are locally stuck. Note that for ACS and Random algorithms, we always run multiple times and choose the best solution.

\begin{table}
	\caption{Performance comparison under different coverage requirement $k$\label{tab:parameter_k_result}}
	\begin{center}
		\ra{1.3}
		{    \fontsize{9pt}{9pt}\selectfont
			\begin{tabular}[t]{ @{}  c | c | c | c |c  @{}} \toprule
	&   & \multicolumn{3}{c}{$n=64$, $\alpha = 0.45$} \\ 	\cline{3-5}	
	Algorithm & $k$  & Computation& Feasible & Traveling \\
	&   & Time  &  Path &  Energy \\
	&   & (s) & Found  &   (kJ) \\ \hline			
				Dynamic  & \multirow{5}{*}{2} & 0.102 & Yes    & 249 \\
				DQN  &  & 16 & Yes    & 249 \\
				ACS  & &  5  & Yes    & 249 \\
				Random & & 0.0006 & Yes    & 249  \\
				Greedy & & 0.0007 & Yes    & 288 \\ \hline
				Dynamic  & \multirow{5}{*}{3} & 455 & Yes    & 702  \\
				DQN   &  & 20 & Yes    & 702  \\
				ACS   &  & 19 & Yes    & 702  \\
				Random & & 0.0003 & No     & --   \\
				Greedy & & 0.0003 & Yes     & 846   \\ \hline
				Dynamic   & \multirow{5}{*}{4} & -- & --    & -- \\
				DQN   &  & 71 & Yes    & 1089 \\
				ACS   &  & 73 & Yes     & 1188  \\
				Random & & 0.0006 & No     & --  \\
				Greedy & & 0.0005 & Yes     & 1254  \\
				\bottomrule
			\end{tabular}
		}
	\end{center}
\end{table}

\subsection{Coverage Requirement} \label{Sec:K}

We set the number of sensor nodes $n= 64$ and the remaining energy threshold $\alpha = 0.45$. The coverage requirement $k$ varies from $2$ to $4$. Table~\ref{tab:parameter_k_result} gives the performances of different algorithms. The higher the coverage requirement $k$ is, the more the sensor nodes need to be charged. The trend is obvious in Table~\ref{tab:parameter_k_result} that the traveling energy increases with the increase of the coverage requirement $k$.  The Random algorithm can only find a feasible path when $k$ is small. The Greedy algorithm can find a feasible solution for different $k$, but with traveling energy much higher than others.
The ACS algorithm can find a feasible path for different $k$ too, with traveling energy less than the Random and Greedy ones. The DQN algorithm can find the optimal path for all $k$s. At the same time, the computing time of DQN including both the training and computing time, grows slowly with the increase of $k$. By contrast, the computing time of the dynamic programming algorithm increases exponentially. The dynamic programming algorithm runs out of the memory when $k$ is large. Overall, the performance of DQN significantly outperforms all other comparison algorithms. It is worth mentioning that the Random algorithm performs better than the Greedy one in some case because the nearest node may not be a good choice and a randomly chosen one may lead the searching of a feasible path out of stuck.

\begin{table}
	\caption{Performance comparison under different sizes of sensor network $n$ \label{tab:parameter_n_result}}
	\begin{center}
		\ra{1.3}
		{    \fontsize{9pt}{9pt}\selectfont
			\begin{tabular}[t]{ @{}  c | c | c | c |c  @{}} \toprule
	& & \multicolumn{3}{c }{$k=3$, $\alpha =0.45$}  \\ 	\cline{3-5}
	Algorithm & $n$  & Computation & Feasible  & Traveling \\
	&   & Time  &  Path &  Energy \\
	&   & (s) & Found  &   (kJ)  \\ \hline			
				Dynamic  & \multirow{5}{*}{48} & 156700 & Yes & 771 \\
				DQN  &&   83 & Yes & 771   \\
				ACS  &&  92 & Yes & 951  \\
				Random & & 0.0004 & Yes & 2085\\
				Greedy & & 0.0004 & Yes & 888\\ \hline
				Dynamic  & \multirow{5}{*}{64} & 455 & Yes    & 702  \\
				DQN   &  & 20 & Yes    & 702  \\
				ACS   &  & 19 & Yes    & 702  \\
				Random & & 0.0003 & No     & --   \\
				Greedy & & 0.0003 & Yes     & 846   \\ \hline
				Dynamic   & \multirow{5}{*}{72} & 362 & Yes & 567 \\
				DQN   & & 32 & Yes & 567\\
				ACS   & & 57 & Yes & 567 \\
				Random & & 0.0003 & Yes & 1941 \\
				Greedy & & 0.0003 & Yes & 810 \\ \hline
                Dynamic   & \multirow{5}{*}{80} & 268 & Yes & 345 \\
				DQN   & & 20 & Yes & 345\\
				ACS   & & 18 & Yes & 345\\
				Random & & 0.0003 & No & -- \\
				Greedy & & 0.0003 & Yes & 375\\
				\bottomrule
			\end{tabular}
		}
	\end{center}
\end{table}

\subsection{Size of Sensor Network}\label{Sec:Size}

We set the coverage requirement $k=3$ and the  remaining energy threshold $\alpha = 0.45$. The size of a sensor network $n$ varies from $48$ to $80$. Table~\ref{tab:parameter_n_result} gives the performances of different algorithms. It is obvious that the traveling energy in Table~\ref{tab:parameter_n_result}  decreases with the increase of $n$ because there are more redundant sensor nodes to maintain the $k$-coverage of a network. The Random algorithm fails to detect a feasible path and the result of the Greedy algorithm is far from the optimal one. The ACS algorithm performs better when $n$ is large and the network has more redundant sensor nodes to provide coverage. Again, the DQN algorithm still performs the best,  finding the optimal paths for all $n$s with a stable computing time. 

\subsection{Remaining Energy Threshold}\label{Sec:Percentage}

Tables~\ref{tab:parameter_alpha_result_1} and~\ref{tab:parameter_alpha_result_2} give the performances of different algorithms with the remaining energy threshold $\alpha$ varying from $0.2$ to $0.8$ under two network settings: $n=32$ and $k=2$, and $n=48$ and $k=3$, respectively. With the increased remaining energy threshold, the traveling energy increases in both network settings. The Random and Greedy algorithms fail to detect a feasible path in many cases. The dynamic programming algorithm runs out of the memory when $\alpha$ is large. Both the ACS and DQN algorithms find feasible paths for all cases. However, the performance of the ACS algorithm decreases with the increase of $\alpha$. The DQN algorithm consistently and significantly outperforms all other comparison algorithms including botht the traveling energy and computing time.

\begin{table}
	\caption{Performance comparison under different remaining energy threshold $\alpha$ when $k=2$, $n = 32$\label{tab:parameter_alpha_result_1}}
	\begin{center}
		\ra{1.3}
		{    \fontsize{9pt}{9pt}\selectfont
			\begin{tabular}[t]{ @{}  c | c | c | c | c @{}} \toprule
				 &   & \multicolumn{3}{c}{$k=2$, $n = 32$} \\ 	\cline{3-5}			
				Algorithm & $\alpha$  & Computation & Feasible  & Traveling  \\
&   & Time  &  Path &  Energy \\
&   & (s) & Found  &   (kJ)  \\ \hline	
				 Dynamic  & \multirow{5}{*}{0.2} & 0.0012 & Yes    & 405\\
				DQN    && 13&  Yes    & 405\\
				ACS    && 3&  Yes    & 405\\
				Random && 0.0002&  No    & --\\
				Greedy && 0.0003&  Yes     & 420\\ \hline
				
				Dynamic  & \multirow{5}{*}{0.4} & 9760&  Yes    & 696\\
				DQN    && 26&  Yes    & 696  \\
				ACS    & & 38 &  Yes    & 855 \\
				Random && 0.0003 & No     & --   \\
				Greedy & & 0.0003&  No     & 891 \\ \hline
				
				Dynamic  &  \multirow{5}{*}{0.6} & 135742 & Yes    & 1071  \\
			DQN    && 116	&  Yes    & 1071  \\
				ACS    & & 232&  Yes    & 2289  \\
				Random & & 0.0006&  No     & --   \\
				Greedy && 0.0004 &  No     & --   \\ \hline
				
			Dynamic  & \multirow{5}{*}{0.8} & -- &  --        & -- \\
				DQN    && 136 & Yes        & 1080 \\
				ACS    && 400 &  Yes         & 2544    \\
				Random && 0.002 &  No         & --   \\
				Greedy & & 0.001& No        & --   \\
				\bottomrule
			\end{tabular}
		}
	\end{center}
\end{table}
\begin{table}
	\caption{Performance comparison under different remaining energy threshold $\alpha$ when $k=3$, $n = 48$\label{tab:parameter_alpha_result_2}}
	\begin{center}
		\ra{1.3}
		{   \fontsize{9pt}{9pt}\selectfont
			\begin{tabular}[t]{ @{}  c | c | c | c | c @{}} \toprule
				 &   & \multicolumn{3}{c }{$k=3$, $n = 48$}  \\ 	\cline{3-5}			
				Algorithm & $\alpha$  & Computation & Feasible  & Traveling \\
&   & Time  &  Path &  Energy \\
&   & (s) & Found  &   (kJ) \\ \hline	
				 Dynamic  & \multirow{5}{*}{0.2} & 0.02 & Yes    & 348 \\
				DQN    && 10 & Yes    & 348 \\
				ACS    && 1 & Yes    & 348 \\
				Random && 0.0003 & No    & -- \\
				Greedy && 0.0002 & Yes    & 348 \\ \hline
				
				Dynamic  & \multirow{5}{*}{0.4} & 145602& Yes & 750 \\
				DQN  &&   81 & Yes & 750   \\
				ACS  &&  90 & Yes & 930  \\
				Random & & 0.0004 & Yes & 2070\\
				Greedy & & 0.0004 & Yes & 870\\ \hline
				
				Dynamic  &  \multirow{5}{*}{0.6} & -- & --    & --\\
			    DQN    && 138 & Yes    & 1020\\
				ACS    && 466 & Yes    & 1521   \\
				Random & & 0.001 & No    & --   \\
				Greedy && 0.001 & No    & --   \\ \hline
				
			    Dynamic  & \multirow{5}{*}{0.8} & --  & --    & --   \\
				DQN    && 481 & Yes    & 1272   \\
				ACS    && 4200   & Yes    & 4788   \\
				Random && 0.01 & No    & --   \\
				Greedy && 0.02 & No    & --   \\
				\bottomrule
			\end{tabular}
		}
	\end{center}
\end{table}

\section{Conclusions}\label{Sec:Conclusion}
We explore the mobile charger scheduling and path optimization problem that the $k$-coverage ability of a wireless rechargeable sensor network system needs to be maintained. We formulate the problem and show its hardness. We model the problem and apply both dynamic programming and reinforcement learning techniques  to design algorithms to tackle the optimal $k$-coverage charging problem. We also implement three other heuristic algorithms for comparison. Extensive simulations with experimental data included demonstrate that the proposed Deep Q-Learning algorithm consistently produces optimal solutions or closed ones and significantly outperforms other methods.

\bibliographystyle{ieeetr}
\bibliography{bio}
\end{document}